\newtheorem{Proposition}{Proposition}
\newtheorem{Lemma}{Lemma}
\newtheorem{lemma}[Lemma]{$\mathbf{Lemma}$}
\newtheorem{proposition}[Proposition]{Proposition}
\begin{document}%
\title{ {\huge Cooperative Non-Orthogonal Multiple Access in 5G Systems }}

\author{ Zhiguo Ding, \IEEEmembership{Member, IEEE}, Mugen Peng, \IEEEmembership{Senior Member, IEEE},
  and  H. Vincent Poor, \IEEEmembership{Fellow, IEEE}\thanks{
The authors  are with the Department of
Electrical Engineering, Princeton University, Princeton, NJ 08544,
USA.   Z. Ding is also with the School of
Computing and Communications, Lancaster
University, LA1 4WA, UK. M. Peng is also with the Institute of Telecommunications, Beijing University of Posts and Telecommunications, Beijing, China.  }\vspace{-0.5em}} \maketitle
\begin{abstract}
 Non-orthogonal multiple access (NOMA) has recently received considerable attention as a promising candidate for 5G systems.  A key feature of NOMA is that users with better channel conditions have   prior information about the messages of the other users. This prior knowledge is fully exploited in this paper, where a  cooperative NOMA scheme is proposed. Outage probability and diversity order achieved by this cooperative NOMA scheme are analyzed, and an approach based on user pairing is also proposed to reduce system complexity in practice.
\end{abstract}\vspace{-1.5em}
 \section{Introduction}
 Non-orthogonal multiple access (NOMA) is  fundamentally different from conventional orthogonal multiple access (MA) schemes, as in NOMA multiple users are encouraged to transmit at the same time, code and frequency, but with different power levels \cite{NOMAPIMRC}. In particular, NOMA allocates less power to the users with better channel conditions, and these users can decode their own information by applying successive interference cancellation \cite{Cover1991}. Consequently the users with better channel conditions will know the messages intended  to the others; however, such prior information has not been  exploited by the existing works about NOMA \cite{6708131} and \cite{Nomading}.

 In this paper, a cooperative NOMA transmission scheme is proposed by fully  exploiting prior information available in NOMA systems. In particular, the use of the successive detection strategy at the receivers  means that  users with better channel conditions need to decode the messages for the others, and therefore these users can be used as relays to improve the reception reliability for the users with poor connections to the base station. Local short-range communication techniques, such as bluetooth and ultra-wideband (UWB), can be used to deliver messages from the users with better channel conditions to the ones with  poor channel conditions.    The outage probability and diversity order achieved by this cooperative NOMA scheme are analyzed, and these analytical results demonstrate that cooperative NOMA can achieve the maximum diversity gain for all the users. In practice, inviting all users in the network   to participate in cooperative NOMA might not be realistic  due to   two reasons. One is that a large amount of system overhead will be consumed to coordinate multi-user networks, and the other is that user cooperation will   consume extra short-range communication resources. User pairing is a promising solution to reduce system complexity, and we demonstrate that grouping   users with high channel quality does not necessarily yield a large performance gain over orthogonal MA. Instead, it is   preferable to pair   users whose channel gains, the absolute squares of the channel coefficients,  are more distinctive.

\section{System Model}
Consider a broadcast channel with one base station (the source), and $K$ users (the destinations). Cooperative NOMA consists of two phases, as described in the following.\vspace{-1em}
\subsection{Direct Transmission Phase}
During this phase, the base station sends $K$ messages to the destinations based on the NOMA principle, i.e., the base station sends $\sum^{K}_{m=1}p_ms_m$, where $s_m$ is the message for the $m$-th user, and $p_m$ is the power allocation coefficient. The observation at the  $k$-th user is given by
\begin{eqnarray}
 {y}_{1,k} &=& \sum^{K}_{m=1} {h}_k  {p}_m {s}_m+ {n}_k,
\end{eqnarray}
where $h_k$ denotes the Rayleigh fading channel coefficient from the base station to the $k$-th user and $n_k$ denote the additive Gaussian noise. Without loss of generality, consider that the users are ordered based on their channel quality, i.e.,
\begin{eqnarray}\label{order}
|h_1|^2\leq \cdots \leq |h_K|^2.
\end{eqnarray}
The use of NOMA implies   $|p_1|^2\geq \cdots\geq |p_K|^2$, with $\sum^{K}_{m=1}   {p}_m^2=1$.
Successive detection will be carried out at the $K$-th user at the end of this phase. The receiving signal to noise ratio (SNR) for  the $K$-th   ordered user to detect the $k$-th user's message, $1\leq k<K$,  is given by
\begin{eqnarray}\label{bound capacity}
SNR_{K,k} &=&   \frac{| {h}_K|^2  |{p}_k|^2}{\sum^{K}_{m=k+1}| {h}_K^H  {p}_m|^2 +\frac{1}{\rho}},
\end{eqnarray}
where $\rho$ is the transmit SNR. After these users' messages are decoded, the $K$-th   user can decode its own information at the following SNR
\begin{eqnarray}
SNR_{K,K} &=&    \rho| {h}_K|^2  |{p}_K|^2.
\end{eqnarray}
Therefore the conditions under which  the $K$-th user can decode its own information  are given by $
\log (1+SNR_{K,k})>R_k,\quad \forall 1\leq k\leq K$,
  where $R_k$ denotes the targeted data rate for the $k$-th user.

\vspace{-1em}
\subsection{Cooperative Phase}
During this phase, the users cooperate  with each other via short range communication channels. Particularly the second phase consists of $(K-1)$ time slots. During the first time slot, the $K$-th user broadcasts the combination of the $(K-1)$ messages with the coefficients $\mathbf{q}_K$, i.e., $\sum^{K-1}_{m=1}q_{K,m}s_m$ and $\sum^{K-1}_{m=1}q_{K,m}^2=1$, where $\sum^{K-1}_{m=1}q_{K,m}^2=1$. The $k$-th user observes the following
\begin{eqnarray}
 {y}_{2,k} &=&   \sum^{K-1}_{m=1} {g}_{K,k}  {q}_{K,m} {s}_m+ {n}_{2,k},
\end{eqnarray}
for $k<K$, where $g_{K,k}$ denotes the inter-user channel coefficient. The $(K-1)$-th user uses maximum ratio combining to combine the observations from both phases, and the SNR for this user to decode the $k$-th user's message, $k<(K-1)$, is  given by
\begin{eqnarray}
SNR_{K-1,k} &=& \frac{|h_{K-1}|^2p_k^2}{|h_{K-1}|^2\sum^{K}_{m=k+1}p_{m}^2
+\frac{1}{\rho}}\\ \nonumber &&+\frac{|g_{K,K-1}|^2q_{K,k}^2}{|g_{K,K-1}|^2\sum^{K-1}_{m=k+1}q_{K,m}^2+\frac{1}{\rho}}.
\end{eqnarray}
After the $(K-1)$-th user decodes the other users' messages, it can decode its own information with the following SNR
\begin{eqnarray}
SNR_{K-1,K-1} = \frac{|h_{K-1}|^2p_{K-1}^2}{|h_{K-1}|^2p_{K}^2
+\frac{1}{\rho}}+ |g_{K,K-1}|^2q_{K,K-1}^2.
\end{eqnarray}
Similarly at the $n$-th time slot, $1\leq n \leq (K-1)$,    the $(K-n+1)$-th user broadcasts the combination of the $(K-n)$ messages with the coefficients ${q}_{K-n+1,m}$, i.e., $\sum^{K-n}_{m=1}q_{K-n+1,m}s_m$. The $k$-th user, $k<(K-n+1)$,  observes
\begin{eqnarray}
 {y}_{2,k} &=&   \sum^{K-n}_{m=1} {g}_{K-n+1,k}^H  {q}_{K-n+1,m} {s}_m+ {n}_{n+1,k}.
\end{eqnarray}
 Combining the observations from  both phases, the $(K-n)$-th user can decode the $k$-th user's message, $1\leq k<(K-n)$, with the following SNR
\begin{align}\label{eq1}
&SNR_{K-n,k} = \frac{|h_{K-n}|^2p_k^2}{|h_{K-n}|^2\sum^{K}_{m=k+1}p_{m}^2
+\frac{1}{\rho}}\\ \nonumber & +\sum^{n}_{i=1}\frac{|g_{K-i+1,K-n}|^2q_{K-i+1,k}^2}{|g_{K-i+1,K-n}|^2\sum^{K-i}_{m=k+1}q_{K-i+1,m}^2+\frac{1}{\rho}},
\end{align}
and it can decode its own information with the following SNR
\begin{align}\label{eq2}
&SNR_{K-n,K-n} = \frac{|h_{K-n}|^2p_{K-n}^2}{|h_{K-n}|^2\sum^{K}_{m=K-n+1}p_{m}^2
+\frac{1}{\rho}} \\\nonumber & +\sum^{n-1}_{i=1}\frac{|g_{K-i+1,K-n}|^2q_{K-i+1,K-n}^2}{|g_{K-i+1,K-n}|^2\sum^{K-i}_{m=K-n+1}q_{K-i+1,m}^2+\frac{1}{\rho}}\\\nonumber &+ \rho|g_{K-n+1,K-n}|^2q_{K-n+1,K-n}^2.
\end{align}
Recall that, without cooperation, the SNR at the $k$-th user is $ \frac{|h_{K-n}|^2p_{K-n}^2}{|h_{K-n}|^2\sum^{K}_{m=K-n+1}p_{m}^2+\frac{1}{\rho}} $. Compared it to \eqref{eq2}, one can find out that the use of cooperation can boost reception reliability.

\vspace{-1em}

\section{Performance Analysis}
Provided  that the $(n-1)$ best users can achieve reliable detection, the outage probability for the $(K-n)$-th user can be expressed as follows:\footnote{Because of the use of short-range communications, the cooperative phase does not consume any cellular frequency, i.e.,  $\epsilon_k= 2^{R_k}-1$. Without using short-range communications, the targeted receive SNR becomes  $\epsilon_k= 2^{\frac{R_k}{K}}-1$, but the analytical results about the diversity order obtained in this paper are still valid with some straightforward modifications.}
\begin{eqnarray}
\mathrm{P}_o^{K-n} \triangleq \mathrm{P}(SNR_{K-n,k}<\epsilon_k, \forall k\in\{1, \cdots, K-n\}),
\end{eqnarray}
where $\epsilon_k= 2^{R_k}-1 $. Note that the use of local short-range communications does not reduce the data rate.  For notational simplicity, define $a_{k,i}^{K-n} = q_{K-i+1,k}^2$ and $b_{k,i}^{K-n} =  \sum^{K-i}_{m=k+1}q_{K-i+1,m}^2 $,  where $1\leq k\leq (K-n)$ and $1\leq i \leq n$ with the special case of $a_{K-n,n}^{K-n} =   q_{K-n+1,K-n}^2 $ and $b_{K-n,n}^{K-n} = 0 $. In addition, define $a_{k,0}^{K-n}=p_k^2$ and $b_{k,0}^{K-n}= \sum^{K}_{m=k+1}p_{m}^2$,  for $1\leq k \leq (K-n)$.   By using the   definition of the outage probability, we can have the following proposition  for the diversity order achieved by the proposed cooperative NOMA scheme.
\begin{proposition}\label{propositon}
Assume  that the $(n-1)$ best users can achieve reliable detection. The proposed cooperative NOMA scheme can ensure that the $(K-n)$-th ordered user  experiences a diversity order of $K$, conditioned on   $\epsilon_{k}< \frac{a_{k,i}^{K-n}}{b_{k,i}^{K-n}}$, for $1\leq k \leq (K-n)$ and $0\leq i \leq n$.
\end{proposition}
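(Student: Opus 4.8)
The plan is to show that, conditioned on the assumed reliable detection at the better users (which validates the combined SNR expression \eqref{eq1}), the outage probability $\mathrm{P}_o^{K-n}$ decays as $\rho^{-K}$; since $K$ is the maximum diversity attainable, this pins the diversity order at exactly $K$. The convenient starting point is to rewrite \eqref{eq1} compactly as a sum of $n+1$ non-negative terms,
\[
SNR_{K-n,k} = \sum_{i=0}^{n} \frac{X_i\, a_{k,i}^{K-n}}{X_i\, b_{k,i}^{K-n} + \frac{1}{\rho}},
\]
where $X_0 = |h_{K-n}|^2$ is the direct channel gain and $X_i = |g_{K-i+1,K-n}|^2$, $1\le i\le n$, are the inter-user gains, all mutually independent and exponentially distributed.

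First I would exploit non-negativity of the summands: the single-message outage event $\{SNR_{K-n,k}<\epsilon_k\}$ forces \emph{each} of the $n+1$ terms below $\epsilon_k$ simultaneously. Fixing any single index $k$ and inverting the term-wise inequality $\frac{X_i a_{k,i}^{K-n}}{X_i b_{k,i}^{K-n}+1/\rho}<\epsilon_k$ under the hypothesis $\epsilon_k < a_{k,i}^{K-n}/b_{k,i}^{K-n}$ yields $X_i < \frac{\epsilon_k}{\rho\,(a_{k,i}^{K-n}-\epsilon_k b_{k,i}^{K-n})}$, i.e.\ a threshold of order $1/\rho$; the degenerate boundary case $b_{k,i}^{K-n}=0$ makes the hypothesis automatic and, since the term becomes $\rho X_i a_{k,i}^{K-n}$, gives the same $O(1/\rho)$ scaling. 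Because the full outage event is contained in $\{SNR_{K-n,k}<\epsilon_k\}$ for that fixed $k$, it is therefore contained in $\bigcap_{i=0}^{n}\{X_i<\theta_i/\rho\}$ for suitable constants $\theta_i>0$, so $\mathrm{P}_o^{K-n}\le \mathrm{P}\!\big(\bigcap_{i=0}^{n}\{X_i<\theta_i/\rho\}\big)$.

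Next I would use mutual independence of the direct and inter-user channels to factor this bound as $\mathrm{P}(|h_{K-n}|^2<\theta_0/\rho)\prod_{i=1}^{n}\mathrm{P}(|g_{K-i+1,K-n}|^2<\theta_i/\rho)$. Each inter-user gain gives $\mathrm{P}(|g|^2<\theta_i/\rho)\sim \theta_i/\rho$, contributing $\rho^{-1}$ apiece for a total of $\rho^{-n}$. The remaining factor is the order-statistics step: since $|h_{K-n}|^2$ is the $(K-n)$-th smallest of $K$ i.i.d.\ exponential gains, $\mathrm{P}(|h_{K-n}|^2<\theta_0/\rho)$ equals the probability that at least $K-n$ of the $K$ gains fall below $\theta_0/\rho$, whose leading term is $\binom{K}{K-n}(\theta_0/\rho)^{K-n}\sim \rho^{-(K-n)}$. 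Multiplying gives $\mathrm{P}_o^{K-n}\lesssim \rho^{-(K-n)}\rho^{-n}=\rho^{-K}$.

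The hard part will be the order-statistics step: one must confirm that imposing the ordering \eqref{order} does not disturb the $\rho^{-(K-n)}$ scaling of the direct term and that the leading exponent is precisely $K-n$, not smaller. A secondary care point is the uniform treatment of the boundary indices with $b_{k,i}^{K-n}=0$. Finally, to upgrade "at least $K$" to "exactly $K$," I would either invoke that $K$ is the maximal diversity any user can attain, or construct a matching lower bound $\mathrm{P}_o^{K-n}\gtrsim \rho^{-K}$ by exhibiting a sufficient outage event $\bigcap_{i}\{X_i<\eta_i/\rho\}$ in which each of the $n+1$ terms stays below $\epsilon_k/(n+1)$, whose probability scales in the same way.
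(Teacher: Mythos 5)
Your decomposition, the term-wise inversion into threshold events on exponential gains, the independence factorization, and the order-statistics treatment of the direct link are essentially the paper's own proof; the one place you deviate from it is also the one genuine gap. You assert that the full outage event is contained in the single-message event $\{SNR_{K-n,k}<\epsilon_k\}$ for one fixed $k$. That containment holds only if outage is read as the \emph{intersection} over $k$ (the literal ``$\forall$'' in the paper's definition of $\mathrm{P}_o^{K-n}$). But the operative outage event for a successive-interference-cancellation receiver is the \emph{union} over $k$: the user fails if it cannot decode at least one of the messages $1,\dots,K-n$ in its decoding chain. This is the reading the paper's proof actually uses, since it justifies its first inequality by $\mathrm{P}(A\cup B)\leq \mathrm{P}(A)+\mathrm{P}(B)$, and it is forced by the system model, where successful detection requires \emph{all} SINR conditions to hold simultaneously, so outage is the complement of that intersection. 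Under the union reading your containment points the wrong way -- the outage event contains each single-$k$ event rather than being contained in one -- so the bound $\mathrm{P}_o^{K-n}\leq \mathrm{P}\left(\bigcap_{i=0}^{n}\{X_i<\theta_i/\rho\}\right)$ does not follow as written.

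The repair is immediate and lands you exactly on the paper's argument: first apply the union bound over $k$, namely $\mathrm{P}_o^{K-n}\leq \sum_{k=1}^{K-n}\mathrm{P}\left(SNR_{K-n,k}<\epsilon_k\right)$, then run your per-$k$ inversion and independence factorization on each summand; each summand is $O(\rho^{-K})$ and there are only $K-n$ of them, so the $\rho^{-K}$ scaling survives. (Alternatively, keep your product form by noting $\bigcup_k\bigcap_{i}\{X_i<\theta_i^{(k)}/\rho\}\subseteq \bigcap_i\{X_i<(\max_k \theta_i^{(k)})/\rho\}$.) Everything else in your proposal is sound, and in two respects it is more careful than the paper: your leading term $\binom{K}{K-n}(\theta_0/\rho)^{K-n}$ for the ordered direct gain is the correct order-statistics expansion (the paper's incomplete-gamma expression for the CDF of $z_{k,0}^{K-n}$ produces the same exponent, which is all that matters for diversity, but not the exact constant), and you rightly flag that an upper bound on the outage probability only certifies diversity \emph{at least} $K$, so a matching lower bound -- or an argument capping the achievable diversity at $K$ -- is needed to conclude ``exactly $K$,'' a step the paper passes over silently.
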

\begin{proof}
For notational simplicity, define $z_{k,i}^{K-n} =\frac{|g_{K-i+1,K-n}|^2q_{K-i+1,k}^2}{|g_{K-i+1,K-n}|^2\sum^{K-i}_{m=k+1}q_{K-i+1,m}^2+\frac{1}{\rho}}$, where $1\leq k\leq(K-n)$ and $1\leq i \leq n$, except $z^{K-n}_{K-n,n}=\rho|g_{K-n+1,K-n}|^2q_{K-n+1,K-n}^2$. In addition, define $z_{k,0}^{K-n}=\frac{|h_{K-n}|^2p_k^2}{|h_{K-n}|^2\sum^{K}_{m=k+1}p_{m}^2+\frac{1}{\rho}}$.  The SNRs can be expressed as follows:
\begin{eqnarray}\label{eq3}
SNR_{K-n,k} = z_{k,0}^{K-n}
 +\sum^{n}_{i=1}  z_{k,i}^{K-n},
\end{eqnarray}
for $1\leq k \leq (K-n)$. Therefore the outage probability can be rewritten as follows:
\begin{align} \nonumber
\mathrm{P}_o^{K-n} &= \mathrm{P}\left( z_{k,0}^{K-n}
 +\sum^{n}_{i=1}  z_{k,i}^{K-n}<\epsilon_k, \forall k\in\{1, \cdots, K-n\}\right)\\   &\leq
 \sum^{K-n}_{k=1}\mathrm{P}\left( z_{k,0}^{K-n}
 +\sum^{n}_{i=1}  z_{k,i}^{K-n}<\epsilon_k\right),
\end{align}
since $\mathrm{P}(A\cup B)\leq \mathrm{P}(A)+\mathrm{P}(B)$.
Because channel gains are independent and  $\mathrm{P}(a+b<c)\leq \mathrm{P}(a<c)+\mathrm{P}(b<c)$, the outage probability can be further bounded as follows:
\begin{eqnarray}\label{eq44}
\mathrm{P}_o^{K-n}   &\leq& \sum^{K-n}_{k=1}\prod^{n}_{i=0}\mathrm{P}\left( z_{k,i}^{K-n} <\epsilon_k\right),
\end{eqnarray}

All the  elements in \eqref{eq3} except $z_{k,0}^{K-n}$ and $z^{K-n}_{K-n,n}$   share the same structure as follows:
\begin{eqnarray}\label{eqxx1}
z_{k,i}^{K-n}=\frac{a_{k,i}^{K-n}x}{b_{k,i}^{K-n}x+\frac{1}{\rho}}.
\end{eqnarray}
When $x$ is exponentially distributed, the cumulative density function (CDF) of $z_{k,i}^{K-n}$ is given by{\small
\begin{eqnarray}
\mathrm{P}_{z_{k,i}^{K-n}}(Z<z)= \left\{\begin{array}{ll} 1,&  \hspace{-1em}if\quad  \hspace{-0.4em}  z\geq \frac{a_{k,i}^{K-n}}{b_{k,i}^{K-n}} \\
1-e^{-\frac{z}{\rho\left(a_{k,i}^{K-n}-b_{k,i}^{K-n}z\right)}}, & otherwise  \end{array}\right . \hspace{-1em},
\end{eqnarray}}
where the definitions for  the coefficients $a_{k,i}^{K-n}$ and $b_{k,i}^{K-n}$ are given in the proposition.

 At high SNR,  $\frac{\epsilon_k}{\rho\left(a_{k,i}^{K-n}-b_{k,i}^{K-n}z\right)}\rightarrow 0$, and the probability for the  event, $  z_{k,i}^{K-n} <\epsilon_k $, can be approximated by using the power series of exponential functions \cite{GRADSHTEYN} as follows:
\begin{eqnarray}\label{eq43}
\mathrm{P}_{z_{k,i}^{K-n}}\left(Z< \epsilon_k\right) = 1-e^{-\frac{\epsilon_k}{\rho\left(a_{k,i}^{K-n}-b_{k,i}^{K-n}\epsilon_k\right)}}
  \approx  \frac{\epsilon_k }{\rho a_{k,i}^{K-n} },
\end{eqnarray}
which is conditioned on  $ \epsilon_k< \frac{a_{k,i}^{K-n}}{b_{k,i}^{K-n}} $.

 The density functions of the two special cases, $z_{k,0}^{K-n}$ and $z^{K-n}_{K-n,n}$, can be obtained as follows. Note that the source-user channels are sorted according to their quality. By applying order statistics \cite{David03}, the CDF of $z_{k,0}^{K-n}$ can be found as follows:\vspace{-0.5em}
 \begin{align}
&\mathrm{P}_{z_{k,0}^{K-n}}(Z<z)=\\\nonumber
 &\left\{\begin{array}{ll} 1, & \hspace{-0.5em} if\quad  z\geq \frac{a_{k,0}^{K-n}}{b_{k,0}^{K-n}} \\
\bigintss^{\frac{\frac{z}{\rho}}{a_{k,0}^{K-n}-b_{k,0}^{K-n}z}}_{\hspace{-3em} 0\hspace{2.5em}}\frac{e^{-x}}{(K-n-1)!}x^{K-n-1}dx, & otherwise  \end{array}\right . .
 \end{align}

Again applying the high SNR approximation,  the  probability, $ \mathrm{P}(z_{k,0}^{K-n} <\epsilon_k)$, can be approximated by using the power series of exponential functions \cite{GRADSHTEYN} as follows:
\begin{align}\nonumber
\mathrm{P}\left( z_{k,0}^{K-n} <\epsilon_k\right) &= \int^{\frac{\epsilon_k}{\rho\left(a_{k,i}^{K-n}-b_{k,i}^{K-n}\epsilon_k\right)}}_{ 0 }\frac{x^{K-n-1}e^{-x}}{(K-n-1)!}dx
\\ \label{eq42} &\approx \frac{\epsilon_k^{K-n}}{(K-n)!\left(a_{k,i}^{K-n}\right)^{K-n}\rho^{K-n}},
\end{align}
conditioned on $\epsilon_{k}< \frac{a_{k,0}^{K-n}}{b_{k,0}^{K-n}}$. Similarly the probability for the event $z^{K-n}_{K-n,n} <\epsilon_k$ can be approximated  as follows:
\begin{eqnarray}\label{eq41}
\mathrm{P}(z^{K-n}_{K-n,n}<\epsilon_{k}) \approx \frac{\epsilon_{k}}{  q_{K-n+1,K-n}^2 \rho},
\end{eqnarray}
since $z^{K-n}_{K-n,n}$ can be treated as a special case of  \eqref{eqxx1}.

Combining \eqref{eq44}, \eqref{eq43}, \eqref{eq42} and \eqref{eq41}, the  diversity order achieved by the cooperative NOMA scheme can be obtained, which completes the proof.
\end{proof}
The overall system outage event is defined as the event that any user in the system cannot achieve reliable detection, which means the overall outage probability is defined as follows:
\begin{eqnarray}
\mathrm{P}_o \triangleq 1- \prod^{K}_{k=1} \left(1-\mathrm{P}_o^{k}\right).
\end{eqnarray}
By using Proposition \ref{propositon} and the fact that the source-destination channels are independent, the following lemma can be  obtained straightforwardly.
\begin{lemma}\label{lemma}
The proposed cooperative NOMA scheme can ensure that the $n$-th best user, $1\leq n \leq K$,  experiences a diversity order of $K$, conditioned on $\epsilon_{k}< \frac{a_{k,i}^{K-n}}{b_{k,i}^{K-n}}$, for $1\leq k \leq (K-n)$ and $0\leq i \leq n$.
\end{lemma}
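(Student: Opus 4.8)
The plan is to remove the conditioning present in Proposition~\ref{propositon} and to let its index range over every user, thereby obtaining an unconditional diversity order of $K$ for each destination. Proposition~\ref{propositon} already shows that, \emph{conditioned} on the $(n-1)$ best users decoding reliably, the $(K-n)$-th ordered user has a conditional outage probability $\mathrm{P}_o^{K-n}$ that decays as $\rho^{-K}$. The remaining work is to account for the event that one of those relaying better users is itself in outage, since the cooperative phase relies on decode-and-forward and a failed relay deprives the user of its cooperative SNR contributions.

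First I would establish a base case. The best user (the $K$-th ordered) decodes its own message directly from $SNR_{K,K}=\rho|h_K|^2|p_K|^2$ without any cooperation, so its outage equals $\mathrm{P}(|h_K|^2<\epsilon_K/(\rho|p_K|^2))$. Because $|h_K|^2$ is the maximum of $K$ i.i.d.\ exponential channel gains, its CDF behaves like $x^K$ near the origin, so this probability scales as $\rho^{-K}$ and the best user enjoys diversity $K$ unconditionally. This supplies the anchor for an induction on user rank, running from the best user down to the worst.

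Next I would carry out the inductive step via the law of total probability together with a union bound. Writing the unconditional outage of a given user as its outage conditioned on all relaying better users being reliable, plus the probability that at least one such relay fails, the first term is $O(\rho^{-K})$ by Proposition~\ref{propositon}, while the second term is a finite sum over the better users, each of which is $O(\rho^{-K})$ by the induction hypothesis. Since a finite sum of terms decaying as $\rho^{-K}$ still decays as $\rho^{-K}$, the unconditional outage of the user in question is $O(\rho^{-K})$, i.e.\ diversity $K$. The independence of the source-destination channels is what justifies factoring the conditioning event cleanly and legitimizes writing the overall outage in the product form $\mathrm{P}_o = 1-\prod_{k=1}^K(1-\mathrm{P}_o^k)$, from which the per-user claim for every $n$ with $1\le n\le K$ follows, subject to the same constraints $\epsilon_k<a_{k,i}^{K-n}/b_{k,i}^{K-n}$ inherited from the proposition.

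I expect the main obstacle to be the bookkeeping around the decode-and-forward conditioning: one must verify that the failure of a relaying user genuinely contributes only a higher-order (diversity-$K$) correction and does not interact with the user's own direct-link term in a way that lowers the exponent. Confirming that the leading-order terms combine to give \emph{exactly} $\rho^{-K}$ --- neither better nor worse --- and that the constraints $\epsilon_k<a_{k,i}^{K-n}/b_{k,i}^{K-n}$ remain simultaneously satisfiable across all ranks is the delicate part; the rest is a routine propagation of the high-SNR approximations already developed in the proof of Proposition~\ref{propositon}.
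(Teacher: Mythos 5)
Your proposal is correct and follows essentially the same route as the paper, which obtains the lemma directly from Proposition~\ref{propositon} plus the independence of the source--destination channels; your induction over user rank with the law of total probability and a union bound simply makes explicit the step the paper calls ``straightforward.'' One minor slip worth fixing: the best user's outage is not just $\mathrm{P}\left(|h_K|^2<\epsilon_K/(\rho |p_K|^2)\right)$, since SIC requires it to decode all $K$ messages before its own, but every one of those decoding events involves the same $|h_K|^2$ (the maximum of $K$ i.i.d.\ exponential channel gains, whose CDF behaves like $x^K$ near the origin), so under the stated constraints $\epsilon_k<a_{k,i}^{K-n}/b_{k,i}^{K-n}$ each event is $O(\rho^{-K})$ and your base case, and hence the whole argument, stands.
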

 This diversity order result is not surprising as explained in the following. Take  the user with the worst channel connection to the source  as an example. When cooperative NOMA is implemented, it gets help from the other $(K-1)$ users, in addition to its own direct channel to the source, which implies that  the number of independent paths from the source to this user is $K$, i.e., the achievable  diversity order for this user is $K$.    In general,   cooperative NOMA can efficiently exploit  user cooperation and  ensure that   a diversity order of $K$  is achievable by all users, regardless of their channel conditions,  whereas  non-cooperative NOMA can achieve only a diversity order of $n$ for the $n$-th ordered user \cite{Nomading}.
\subsection*{Reducing System Complexity via User Pairing}\label{subsection pair}\vspace{-0.5em}
Practical implementation of cooperative NOMA may face some challenges, such as large time delay,     extra system overhead  for coordinating multiple users, as well as  additional short-range communication bandwidth resources consumed for cooperation. This motivates the study of user pairing/grouping. Particularly it is more practical to divide the users in one cell into multiple groups, where cooperative NOMA is implemented within each  group and conventional MA can be used for inter-group multiple access. Since there are fewer users in each group to participate in cooperative NOMA in this hybrid MA system, the aforementioned challenges    can be effectively mitigated.   Without loss of generality,  we focus on the case to select only two users. An important question to be answered here is which two users should be grouped together.

 Consider that the users are ordered as \eqref{order}, and the  $m$-th and $n$-th users are paired together, $m<n$. The conventional TDMA can achieve the following rates
\begin{eqnarray}
 \bar{R}_{m}=\frac{1}{2}\log\left(1+\rho |h_{m}|^2\right), \quad \bar{R}_{n}=\frac{1}{2}\log\left(1+\rho |h_{n}|^2\right).
\end{eqnarray}

The rates achieved by cooperative NOMA is quite complicated, so we first consider conventional NOMA which can achieve the following rates
\begin{eqnarray}
R_{m} &=& \log\left(1+  \frac{\rho| h_{m}|^2  {p}_{m}^2}{\rho |{h}_{m}|^2  {p}_{n}^2  +1}\right),
\end{eqnarray}and $
R_{n} =  \log\left(1+\rho p_{n}^2 |h_{n}|^2\right)$,
where $R_{n}$ is achievable since $\log\left(1+\frac{| {h}_{n}|^2  {p}_{m}^2}{ |{h}_{n}|^2  {p}_{n}^2  +1}\right)\geq R_m$.

The gap between the two sum rates achieved by TDMA and conventional  NOMA can be expressed as follows:
\begin{align}\label{diff}
&R_m+R_n-\bar{R}_m-\bar{R}_n
\\\nonumber \approx & \log\left(1+  \frac{   {p}_{m}^2}{   {p}_{n}^2  }\right)+\log \rho p_n^2|h_n|^2 -\frac{\log \rho |h_{m}|^2}{2}-\frac{\log \rho |h_{n}|^2}{2}
\\\nonumber = &\frac{\log   |h_{n}|^2}{2}-\frac{\log   |h_{m}|^2}{2},
\end{align}
where the approximation is obtained at high SNR.
It is interesting to observe that the gap is not a function of power allocation coefficients $p_m$, but depends on how different the two users' channels are.
Therefore to conventional NOMA, the worst choice of $m$ and $n$ is $n=m+1$, and it is ideal to group two users who experience significantly different channel fading. This observation is also valid to cooperative NOMA. Particularly an important observation from \eqref{bound capacity} is that the data rate for the $m$-th user is bounded as $
R_{m} \leq \log\left(1+  \frac{\rho| h_{n}|^2  {p}_{m}^2}{\rho |{h}_{n}|^2  {p}_{n}^2  +1}\right)$,
although $R_m$ can be as large as $ \log\left(1+  \frac{\rho| h_{m}|^2  {p}_{m}^2}{\rho |{h}_{m}|^2  {p}_{n}^2  +1}+\rho|g_{n,m}|^2\right)$, where the bound is due to the fact that the $n$-th user needs to decode the $m$-th user's information.  Since $\log\left(1+  \frac{\rho| h_{n}|^2  {p}_{m}^2}{\rho |{h}_{n}|^2  {p}_{n}^2  +1}\right)\approx \log\left(1+  \frac{   {p}_{m}^2}{   {p}_{n}^2  }\right)$, the conclusion obtained for conventional NOMA can also be applied  to cooperative NOMA.

\begin{figure}[!htbp]\centering\vspace{-1.5em}
    \epsfig{file=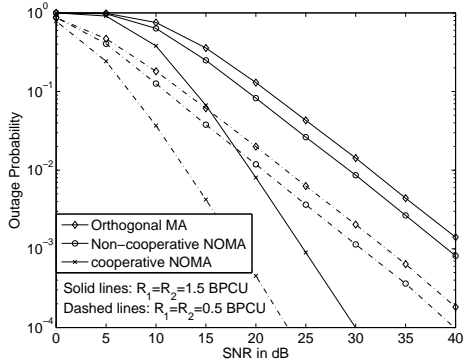, width=0.35\textwidth, clip=}\vspace{-1em}
\caption{Outage probability achieved by cooperative NOMA.}\label{fig1}\vspace{-2em}
\end{figure}

\section{Numerical Studies}\label{section numeral}
In this section, the performance of cooperative NOMA is evaluated  by using computer simulations.  In Fig. \ref{fig1}, the outage probability achieved by the three schemes, e.g., the orthogonal MA scheme,  non-cooperative NOMA, and cooperative NOMA, is shown as a function of SNR, with $K=2$ and $p_1^2=\frac{4}{5}$. As can be seen from the figure,   cooperative NOMA   outperforms the other two schemes, since it can ensure that the maximum diversity gain is achievable to all the users  as indicated by Lemma \ref{lemma}.
In Fig. \ref{fig2}, the outage capacity achieved by the three schemes is demonstrated, by setting  $R_1=R_2$. With $10\%$ outage probability and the transmit SNR equal to $15$ dB, the orthogonal MA scheme can achieve a rate of   $0.7$ bits per channel use (BPCU),  non-cooperative NOMA can support $0.95$ BPCU, and cooperative NOMA can support $1.7$ BPCU, much larger than the other two schemes.   Fig. \ref{fig31}   demonstrates that  the proposed cooperative NOMA scheme can still outperform the comparable schemes, particularly at high SNR, even if   local short-range communication bandwidth resources are not available. Note that without using short range communications, extra $(M-1)$ time slots are used for cooperation transmissions.

 In Fig. \ref{fig3}, the impact of user pairing is investigated by studying the difference between the sum rates achieved by the orthogonal MA scheme and NOMA. Particularly, suppose  that  the $K$-th ordered user, i.e., the user with the best channel condition, is scheduled, and Fig. \ref{fig3} demonstrates how large a sum rate gain can be obtained by pairing it with different  users. As discussed in Section \ref{subsection pair},   without careful user scheduling, the benefit of using NOMA is diminishing.  Such a conclusion is confirmed by the results shown in Fig. \ref{fig3}, where pairing the $K$-th user with  the first user, i.e., the user with the worst channel condition, can yield a significant gain. This observation is also  consistent to the motivation of NOMA in \cite{NOMAPIMRC} which is to schedule two users, one close to the cell edge and the other close to the source.\vspace{-1em}
\begin{figure}[htbp]\centering
    \epsfig{file=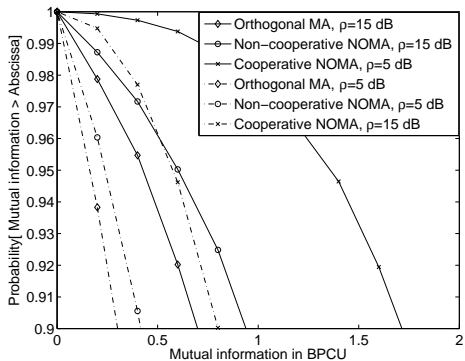, width=0.34\textwidth,clip=}\vspace{-1em}
    \caption{Outage capacity achieved by cooperative NOMA.}\label{fig2}\vspace{-1.5em}
\end{figure}
\vspace{-0.5em}\section{Conclusions}
In this paper, we have proposed a cooperative NOMA transmission scheme which fully uses the fact that some users in NOMA systems have prior information about the others' messages. Analytical results have been developed to demonstrate the performance gain of this cooperative NOMA scheme.     It has been recognized that optimizing power allocation coefficients can improve the performance of non-cooperative NOMA \cite{6692251, 6735800} and it is a promising future direction to study optimal power allocation in cooperative NOMA systems for further performance improvement.

\begin{figure}[htbp]\centering \vspace{-1em}
    \epsfig{file=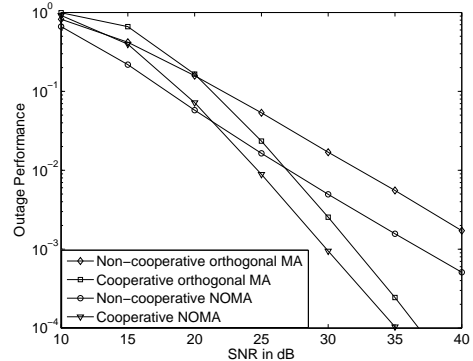, width=0.34\textwidth,clip=}\vspace{-1em}
    \caption{Outage probability achieved by cooperative NOMA without using local short-range communications. $R_1=1.2$ BPCU and $R_2=1.9$ BPCU. }\label{fig31}\vspace{-1em}
\end{figure}

\begin{figure}[htbp]\centering \vspace{-1em}
    \epsfig{file=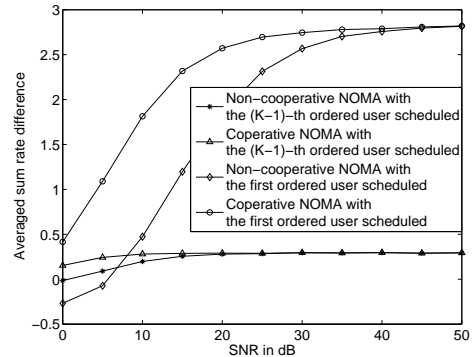, width=0.34\textwidth,clip=}\vspace{-1em}
    \caption{The impact of user pairing on the sum rate. $K=10$.  }\label{fig3}\vspace{-1.5em}
\end{figure}

 \bibliographystyle{IEEEtran}
\bibliography{IEEEfull,trasfer}

\begin{thebibliography}{1}
\providecommand{\url}[1]{#1}
\csname url@samestyle\endcsname
\providecommand{\newblock}{\relax}
\providecommand{\bibinfo}[2]{#2}
\providecommand{\BIBentrySTDinterwordspacing}{\spaceskip=0pt\relax}
\providecommand{\BIBentryALTinterwordstretchfactor}{4}
\providecommand{\BIBentryALTinterwordspacing}{\spaceskip=\fontdimen2\font plus
\BIBentryALTinterwordstretchfactor\fontdimen3\font minus
  \fontdimen4\font\relax}
\providecommand{\BIBforeignlanguage}[2]{{%
\expandafter\ifx\csname l@#1\endcsname\relax
\typeout{** WARNING: IEEEtran.bst: No hyphenation pattern has been}%
\typeout{** loaded for the language `#1'. Using the pattern for}%
\typeout{** the default language instead.}%
\else
\language=\csname l@#1\endcsname
\fi
#2}}
\providecommand{\BIBdecl}{\relax}
\BIBdecl

\bibitem{NOMAPIMRC}
Y.~Saito, A.~Benjebbour, Y.~Kishiyama, and T.~Nakamura, ``System level
  performance evaluation of downlink non-orthogonal multiple access {(NOMA)},''
  in \emph{Proc. IEEE Annual Symposium on Personal, Indoor and Mobile Radio
  Communications (PIMRC)}, London, UK, Sept. 2013.

\bibitem{Cover1991}
T.~Cover and J.~Thomas, \emph{Elements of Information Theory}, 6th~ed.\hskip
  1em plus 0.5em minus 0.4em\relax Wiley and Sons, New York, 1991.

\bibitem{6708131}
J.~Choi, ``Non-orthogonal multiple access in downlink coordinated two-point
  systems,'' \emph{IEEE Commun. Letters}, vol.~18, no.~2, pp. 313--316, Feb.
  2014.

\bibitem{Nomading}
Z.~Ding, Z.~Yang, P.~Fan, and H.~V. Poor, ``On the performance of
  non-orthogonal multiple access in {5G} systems with randomly deployed
  users,'' \emph{IEEE Signal Process. Letters}, vol.~21, no.~12, pp.
  1501--1505, Dec 2014.

\bibitem{GRADSHTEYN}
I.~S. Gradshteyn and I.~M. Ryzhik, \emph{Table of Integrals, Series and
  Products}, 6th~ed.\hskip 1em plus 0.5em minus 0.4em\relax New York: Academic
  Press, 2000.

\bibitem{David03}
H.~A. David and H.~N. Nagaraja, \emph{Order Statistics}.\hskip 1em plus 0.5em
  minus 0.4em\relax John Wiley, New York, 3rd ed., 2003.

\bibitem{6692251}
Y.~Hayashi, Y.~Kishiyama, and K.~Higuchi, ``Investigations on power allocation
  among beams in non-orthogonal access with random beamforming and intra-beam
  {SIC} for cellular {MIMO} downlink,'' in \emph{IEEE Vehicular Technology
  Conference}, Las Vegas, NV, USA, Sept 2013.

\bibitem{6735800}
B.~Kim, S.~Lim, H.~Kim, S.~Suh, J.~Kwun, S.~Choi, C.~Lee, S.~Lee, and D.~Hong,
  ``Non-orthogonal multiple access in a downlink multiuser beamforming
  system,'' in \emph{IEEE Military Communications Conference}, San Diego, CA,
  USA, Nov. 2013.

\end{thebibliography}

  \end{document}